\documentclass[twocolumn,english]{IEEEtran}
\usepackage[T1]{fontenc}
\usepackage{color}
\usepackage{babel}
\usepackage{amsmath}
\usepackage{amssymb}
\usepackage{stackrel}
\usepackage{graphicx}
\usepackage[unicode=true,
 bookmarks=true,bookmarksnumbered=true,bookmarksopen=true,bookmarksopenlevel=1,
 breaklinks=false,pdfborder={0 0 0},pdfborderstyle={},backref=false,colorlinks=false]
 {hyperref}
\hypersetup{pdftitle={Your Title},
 pdfauthor={Your Name},
 pdfpagelayout=OneColumn, pdfnewwindow=true, pdfstartview=XYZ, plainpages=false}
\usepackage{breakurl}

\makeatletter
\ifCLASSOPTIONcompsoc
\usepackage[caption=false,font=normalsize,labelfont=sf,textfont=sf]{subfig}
\else
\usepackage[caption=false,font=footnotesize]{subfig}
\fi
\usepackage{cite}
\usepackage{amsthm}
\newtheorem{theorem}{\textbf{Theorem}}
\newtheorem{corollary}{\textbf{Corollary}}
\newtheorem{lemma}{\textbf{Lemma}}
\newtheorem{proposition}{\textbf{Proposition}}

\usepackage{algorithm}
\usepackage{amsmath}

\@ifundefined{showcaptionsetup}{}{%
 \PassOptionsToPackage{caption=false}{subfig}}
\usepackage{subfig}
\makeatother

\begin{document}

\title{Limitation of SDMA in Ultra-Dense Small Cell Networks}

\author{\IEEEauthorblockN{Junyu Liu, Min Sheng, Jiandong Li}\\
\IEEEauthorblockA{State Key Laboratory of Integrated Service Networks,
Xidian University, Xi'an, Shaanxi, 710071, China\\
Email: junyuliu@xidian.edu.cn, \{msheng, jdli\}@mail.xidian.edu.cn}}
\maketitle
\begin{abstract}
Benefitting from multi-user gain brought by multi-antenna techniques,
space division multiple access (SDMA) is capable of significantly
enhancing spatial throughput (ST) in wireless networks. Nevertheless,
we show in this letter that, even when SDMA is applied, ST would diminish
to be zero in ultra-dense networks (UDN), where small cell base stations
(BSs) are fully densified. More importantly, we compare the performance
of SDMA, single-user beamforming (SU-BF) (one user is served in each
cell) and full SDMA (the number of served users equals the number
of equipped antennas). Surprisingly, it is shown that SU-BF achieves
the highest ST and critical density, beyond which ST starts to degrade,
in UDN. The results in this work could shed light on the fundamental
limitation of SDMA in UDN.
\end{abstract}

\section{Introduction\label{sec:Introduction}}

While network densification is promising to improve network capacity,
its limitation has been manifest in recent research \cite{MUPM_Ref_Original,Ref_Rev_1,Ref_SBPM}.
Depending on practical system settings, it is shown that area spectral
efficiency would decrease as base station (BS) density grows in downlink
cellular networks \cite{Ref_Rev_1}. Worsestill, network capacity
is shown to diminish to zero when single-antenna BSs are over deployed
\cite{Ref_SBPM}. To rejuvenate the potential of network densification,
space division multiple access (SDMA), which is capable of simultaneously
serving multiple users within one cell over identical spectrum resources,
is shown to be of great potential \cite{Ref_MIMO_2,Ref_UDN_MISO}.
Especially, it is shown that single-user beamforming (SU-BF), a special
case of SDMA, could improve network capacity by tens of folds in ultra-dense
networks (UDN), compared to the single-antenna regime \cite{Ref_UDN_MISO}.
However, only single user could be served within one cell using SU-BF.
Therefore, it is imperative to study whether the multi-user gain of
SDMA could be harvested to further improve network capacity in UDN.

In this work, we investigate the performance of SDMA in ultra-dense
downlink small cell networks. As more than one user is served within
one cell using SDMA, spatial resources could be fully exploited. However,
it is unexpectedly shown that network spatial throughput (ST) would
monotonously decrease with the number of served users within one cell
when BS density is sufficiently large, which radically contradicts
with the sparse deployment cases. Moreover, it is proved that applying
SU-BF could as well maximize the critical density, beyond which network
capacity starts to diminish. Therefore, the results reveal the fundamental
limits of SDMA and confirm the superiority of SU-BF in UDN\footnote{Different from \cite{Ref_SBPM}, which investigates the limitation
of network densification, we intend to unveil the limitation of SDMA
in UDN in this work.}.

\section{System Model\label{sec:System-Model}}

We consider a\textcolor{black}{{} downlink small cell network, where
the locations of $N_{\mathrm{a}}$-antenna BSs (with constant transmit
power $P$) and single-antenna users in a two-dimension plane are
modeled as two independent homogeneous Poisson Point Processes (PPPs)
$\Pi_{\mathrm{BS}}=\left\{ \mathrm{BS}_{i}\right\} $ $\left(i\in\mathbb{N}\right)$
of density $\lambda$ and $\Pi_{\mathrm{U}}=\left\{ \mathrm{U}_{j}\right\} $
$\left(j\in\mathbb{N}\right)$ of density $\lambda_{\mathrm{U}}$,
respectively. If $r$ is the 2-dimension distance of cellular link,
the distance between the antennas of BSs and the associated users
is $d=\sqrt{r^{2}+\Delta h^{2}}$, where $\Delta h>0$ denotes the
antenna height difference (AHD) between BSs and users}

\textcolor{black}{We further assume that each user connects to the
BS providing the smallest pathloss. Accordingly, the probability density
function (PDF) of $r$ is given by \cite{book_stochastic_geometry}
\begin{align}
f_{r}\left(x\right)= & 2\pi\lambda xe^{-\pi\lambda x^{2}}.\:\left(x\geq0\right)\label{eq:distance PDF}
\end{align}
Denote $N_{\mathrm{U}}$ $\left(N_{\mathrm{U}}\leq N_{\mathrm{a}}\right)$
as the number of active users within one cell. Meanwhile, a dense
user deployment is considered, i.e., $\lambda\gg\lambda_{\mathrm{U}}$,
such that each full-buffer BS could select at most $N_{\mathrm{a}}$
users to serve using SDMA.}

Channel gain consists of a pathloss component and a small-scale fading
component. In particular, given transmission distance $d$, a generalized
multi-slope pathloss model (MSPM) is adopted to characterize differentiated
signal power attenuation rates within different regions, i.e.,
\begin{equation}
l_{N}\left(\left\{ \alpha_{n}\right\} _{n=0}^{N-1};d\right)=K_{n}d^{-\alpha_{n}},\:R_{n}\leq d<R_{n+1}\label{eq:MUPM}
\end{equation}
where $\alpha_{n}$ denotes the pathloss exponent, $K_{0}=1$, $K_{n}=\prod_{i=1}^{n}R_{i}^{\alpha_{i}-\alpha_{i-1}}$
$\left(n\geq1\right)$, and $0=R_{0}<R_{1}<\cdots<R_{N}=\infty$.
Note that $\alpha_{i}\leq\alpha_{j}$ $\left(i<j\right)$ and $\alpha_{N-1}>2$
for practical concerns. Rayleigh fading is used to model small-scale
fading for mathematical tractability. The rationality of Rayleigh
fading assumption in UDN has been verified via experimental results
in \cite{Ref_Test_BPM}. 

SDMA with zero-forcing precoding is applied, where perfect channel
state information (CSI) from the BSs to the serving downlink users
could be obtained to design the precoders. According to \cite{Ref_MIMO_2,Ref_Gamma_1,Ref_Gamma_2},
under Rayleigh fading channels, the channel power gain caused by small-scaling
fading of both direct link and interfering link follows gamma distribution
when zero-forcing precoding is applied by the multi-antenna technique.
If denoting $H_{\mathrm{U}_{0},\mathrm{BS}_{0}}$ as the channel power
gain caused by small-scale fading from $\mathrm{BS}_{0}$ to $\mathrm{U}_{0}$
(typical pair) and $H_{\mathrm{U}_{0},\mathrm{BS}_{i}}$ $\left(i\neq0\right)$
as the channel power gain caused by small scaling fading from $\mathrm{BS}_{i}$
(interfering BS) to $\mathrm{U}_{0}$, then $H_{\mathrm{U}_{0},\mathrm{BS}_{0}}\sim\Gamma\left(N_{\mathrm{a}}-N_{\mathrm{U}}+1,1\right)$
and $H_{\mathrm{U}_{0},\mathrm{BS}_{i}}\sim\Gamma\left(N_{\mathrm{U}},1\right)$.
It is worth noting that SDMA degenerates into full SDMA when $N_{\mathrm{U}}=N_{\mathrm{a}}$,
while degenerates into SU-BF when $N_{\mathrm{U}}=1$.

We use network ST for performance evaluation of SDMA in UDN. In particular,
ST is defined by 
\begin{equation}
\mathsf{ST}_{N}\left(\lambda\right)=N_{\mathrm{U}}\lambda\mathsf{CP}_{N}\left(\lambda\right)\log_{2}\left(1+\tau\right).\label{eq: define ST}
\end{equation}
In (\ref{eq: define ST}), $\tau$ denotes the decoding threshold
and $\mathsf{CP}_{N}\left(\lambda\right)$ denotes the coverage probability
(CP) of the typical downlink users, which is given by
\begin{equation}
\mathsf{CP}_{N}\left(\lambda\right)=\mathbb{P}\left\{ \mathsf{SIR}_{\mathrm{U}_{0}}>\tau\right\} .\label{eq: define CP}
\end{equation}
In (\ref{eq: define CP}), $\mathsf{SIR}_{\mathrm{U}_{0}}$ denotes
the signal-to-interference ratio (SIR) at $\mathrm{U}_{0}$. Note
that the subscript `$N$' denotes the number of slopes in (\ref{eq:MUPM}).

\textbf{Notation}: Denoting $_{2}F_{1}\left(\cdot,\cdot,\cdot,\cdot\right)$
as Gaussian hypergeometric function, we denote $\omega\left(x,y,z\right)={}_{2}F_{1}\left(x,-\frac{2}{y},1-\frac{2}{y},-z\right)$
in the remaining parts. Besides, $l_{N}\left(\left\{ \alpha_{n}\right\} _{n=0}^{N-1};d\right)$
is replaced by $l_{N}\left(d\right)$ for notation simplicity. 

\section{Performance Evaluation of SDMA in UDN\label{sec:SDMA in UDN}}

In this section, we evaluate the performance of SDMA in terms of network
ST, which depends on the distribution of $\mathsf{SIR}_{\mathrm{U}_{0}}$
according to (\ref{eq: define ST}). In particular, $\mathsf{SIR}_{\mathrm{U}_{0}}$
is given by
\begin{align}
\mathsf{SIR}_{\mathrm{U}_{0}}= & PH_{\mathrm{U}_{0},\mathrm{BS}_{0}}l\left(d_{0}\right)/I_{\mathrm{IC}},\label{eq:SIR expression}
\end{align}
where $I_{\mathrm{IC}}=\underset{\tiny{\mathrm{BS}_{i}\in\tilde{\Pi}_{\mathrm{BS}}}}{\sum}PH_{\mathrm{U}_{0},\mathrm{BS}_{i}}l\left(d_{i}\right)$
is the inter-cell interference, $d_{i}$ denotes the distance between
the antennas of $\mathrm{BS}_{i}$ and $\mathrm{U}_{0}$, and $H_{\mathrm{U}_{0},\mathrm{BS}_{i}}$
denotes the power gain caused by small-scale-fading from $\mathrm{BS}_{i}$
to $\mathrm{U}_{0}$. Following $\mathsf{SIR}_{\mathrm{U}_{0}}$ in
(\ref{eq:SIR expression}), we present the results on CP and ST in
the following corollary.

\begin{corollary}[CP and ST in SDMA System]

When SDMA is applied under MSPM, ST in downlink small cell networks
is given by $\mathsf{ST}_{N}\left(\lambda\right)=N_{\mathrm{U}}\lambda\mathsf{CP}_{N}\left(\lambda\right)\log_{2}\left(1+\tau\right)$,
where
\begin{align}
\mathsf{CP}_{N}\left(\lambda\right)= & \mathbb{E}\left[\stackrel[k=0]{N_{\mathrm{a}}-N_{\mathrm{U}}}{\sum}\frac{\left(-s\right)^{k}}{k!}\frac{\mathrm{d}^{k}}{\mathrm{d}s^{k}}\mathcal{L}_{I_{\mathrm{IC}}}\left(s\right)\left|_{s=\frac{\tau}{Pl_{N}\left(d_{0}\right)}}\right.\right].\label{eq:CP SDMA}
\end{align}
In (\ref{eq:CP SDMA}), $\mathcal{L}_{I_{\mathrm{IC}}}\left(s\right)$
is the Laplace Transform of $I_{\mathrm{IC}}$ evaluated at $s=\frac{\tau}{Pl_{N}\left(d_{0}\right)}$,
which is given by
\begin{align}
\mathcal{L}_{I_{\mathrm{IC}}}\left(s\right)= & e^{-2\pi\lambda\int_{d_{0}}^{\infty}x\left(1-\left(1+sPl_{N}\left(x\right)\right)^{-1}\right)\mathrm{d}x}.\label{eq:Laplace of Interference}
\end{align}

\label{corollary: CP and ST SDMA}

\end{corollary}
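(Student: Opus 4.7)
The plan is to compute $\mathsf{CP}_{N}(\lambda)$ by combining the gamma CCDF of the desired-signal channel gain $H_{\mathrm{U}_{0},\mathrm{BS}_{0}}$ with the Laplace transform of the aggregate interference $I_{\mathrm{IC}}$ obtained through the probability generating functional (PGFL) of $\tilde{\Pi}_{\mathrm{BS}}$, after which the ST formula follows from direct substitution into (\ref{eq: define ST}).

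First, I would rewrite the event $\{\mathsf{SIR}_{\mathrm{U}_{0}}>\tau\}$ as $\{H_{\mathrm{U}_{0},\mathrm{BS}_{0}}>sI_{\mathrm{IC}}\}$ with $s=\tau/(Pl_{N}(d_{0}))$. Since $H_{\mathrm{U}_{0},\mathrm{BS}_{0}}\sim\Gamma(N_{\mathrm{a}}-N_{\mathrm{U}}+1,1)$ has integer shape parameter, its CCDF evaluated at $sI_{\mathrm{IC}}$ is
\begin{equation*}
e^{-sI_{\mathrm{IC}}}\sum_{k=0}^{N_{\mathrm{a}}-N_{\mathrm{U}}}\frac{(sI_{\mathrm{IC}})^{k}}{k!}.
\end{equation*}
Averaging over $I_{\mathrm{IC}}$ conditioned on $d_{0}$, and invoking the derivative identity $\mathbb{E}[I_{\mathrm{IC}}^{k}e^{-sI_{\mathrm{IC}}}]=(-1)^{k}\frac{\mathrm{d}^{k}}{\mathrm{d}s^{k}}\mathcal{L}_{I_{\mathrm{IC}}}(s)$, reproduces the bracketed sum in (\ref{eq:CP SDMA}); a final expectation over $d_{0}$, whose density is given by (\ref{eq:distance PDF}), closes the derivation of the CP.

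Second, for (\ref{eq:Laplace of Interference}) I would apply the PGFL of a PPP. Because nearest-pathloss association forces every interferer to lie at distance greater than $d_{0}$, the conditional (reduced Palm) distribution of $\tilde{\Pi}_{\mathrm{BS}}$ is, by Slivnyak's theorem, a homogeneous PPP of intensity $\lambda$ on $\{x:\lVert x\rVert>d_{0}\}$. Multiplying the per-interferer Laplace factor $\mathbb{E}_{H}[e^{-sPHl_{N}(x)}]$ associated with the gamma-distributed fading gains $H_{\mathrm{U}_{0},\mathrm{BS}_{i}}$ across the point process and passing to polar coordinates then yields the exponential form in (\ref{eq:Laplace of Interference}).

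The main obstacle I anticipate will be keeping the exclusion region consistent across the different conditioning steps: the lower limit $d_{0}$ in the interference integral, the evaluation point $s=\tau/(Pl_{N}(d_{0}))$, and the outer averaging against (\ref{eq:distance PDF}) all refer to the same random variable, so the $k$th derivative in $s$ must be computed before the expectation over $d_{0}$ is taken, and the interference PGFL must be written as a function of $d_{0}$ throughout. Once $\mathsf{CP}_{N}(\lambda)$ is in place, direct substitution into (\ref{eq: define ST}) delivers the ST formula and completes the argument.
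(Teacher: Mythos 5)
Your derivation is correct and is essentially the standard argument that the paper's one-line proof (a pointer to an external corollary plus ``easy manipulation'') leaves implicit: rewrite $\left\{ \mathsf{SIR}_{\mathrm{U}_{0}}>\tau\right\}$ as $\left\{ H_{\mathrm{U}_{0},\mathrm{BS}_{0}}>sI_{\mathrm{IC}}\right\}$, expand the integer-shape gamma CCDF, convert the moments $\mathbb{E}\left[I_{\mathrm{IC}}^{k}e^{-sI_{\mathrm{IC}}}\right]$ into derivatives of $\mathcal{L}_{I_{\mathrm{IC}}}$, and evaluate $\mathcal{L}_{I_{\mathrm{IC}}}$ by the PGFL over the PPP restricted outside the association exclusion region; your caution that the $s$-derivatives must be taken before the outer expectation over $d_{0}$ is exactly right. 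The one place where your argument does not reproduce the statement as printed is the interference Laplace transform: with $H_{\mathrm{U}_{0},\mathrm{BS}_{i}}\sim\Gamma\left(N_{\mathrm{U}},1\right)$ the per-interferer factor is $\mathbb{E}_{H}\left[e^{-sPHl_{N}\left(x\right)}\right]=\left(1+sPl_{N}\left(x\right)\right)^{-N_{\mathrm{U}}}$, so the PGFL yields exponent $-N_{\mathrm{U}}$ inside the integral, whereas (\ref{eq:Laplace of Interference}) shows exponent $-1$. That $-1$ is correct only when $N_{\mathrm{U}}=1$; the paper's own later displays (exponent $-N_{\mathrm{a}}$ in (\ref{eq:scaling law 1}) under full SDMA, where $N_{\mathrm{U}}=N_{\mathrm{a}}$, and exponent $-N_{\mathrm{U}}$ in (\ref{eq:APP proof 1})) confirm that the general formula should carry $-N_{\mathrm{U}}$, so you should flag (\ref{eq:Laplace of Interference}) as a typo rather than bend your PGFL step to match it. A further minor point: the association exclusion region is a disk of two-dimensional radius $r_{0}$, and the lower limit $d_{0}=\sqrt{r_{0}^{2}+\Delta h^{2}}$ in the integral only emerges after the change of variables $x=\sqrt{t^{2}+\Delta h^{2}}$; your ``passing to polar coordinates'' step should make that substitution explicit, since with a nonzero antenna height difference the planar radius and the link distance are distinct variables.
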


\begin{proof}Following \cite[Corollary 2]{Ref_UDN_MISO}, the results
can be obtained with easy manipulation. The detail is omitted due
to space limitation.\end{proof}

Corollary \ref{corollary: CP and ST SDMA} provides a numerical approach
to evaluate the performance of SDMA in UDN. As well, the impact of
key parameters of SDMA, such as the number of antennas and served
users, on system performance could be captured. In particular, we
plot ST as a function of BS density in Fig. \ref{fig:ST SSPM and DSPM}
under single-slope pathloss model (SSPM), i.e., $N=1$ in (\ref{eq:MUPM}),
and dual-slope pathloss model (DSPM), i.e., $N=2$ in (\ref{eq:MUPM}).
It can be seen from Fig. \ref{fig:ST - impact of Na} that network
ST could be significantly improved by SDMA, compared to the single-antenna
regime. Especially, the maximal ST could be improved by 21 folds under
DSPM when 16 antennas are equipped by each BS and 2 users are served
within one cell. Meanwhile, we evaluate the impact of AHD on the network
ST in Fig. \ref{fig:ST - impact of AHD}. It is observed that network
ST would be significantly over-estimated without considering the AHD,
i.e., ST even linearly increases with BS density. As the $\Delta h=0$m
assumption is no longer impractical when the distance from transmitters
and the intended receivers is small, this confirms the importance
of modeling AHD when evaluating the performance of UDN.

Taking $\Delta h>0$m into account, however, it is pessimistic to
observe that ST would asymptotically approach zero under dense deployment
even when SDMA is applied. Therefore, we analytically study ST scaling
law next.

\begin{figure}[t]
\begin{centering}
\subfloat[\label{fig:ST - impact of Na}Impact of $N_{\mathrm{a}}$.]{\begin{centering}
\includegraphics[width=3in]{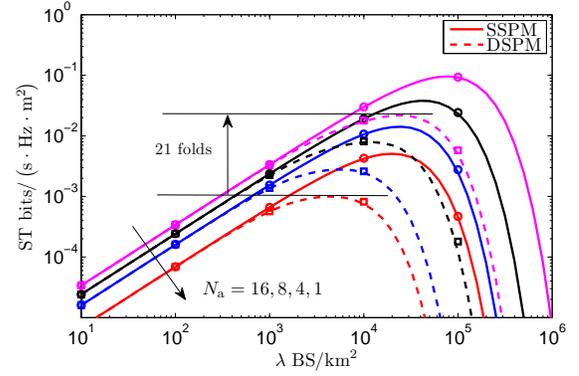}
\par\end{centering}
}
\par\end{centering}
\begin{centering}
\subfloat[\label{fig:ST - impact of AHD}Impact of AHD $\Delta h$.]{\begin{centering}
\includegraphics[width=3in]{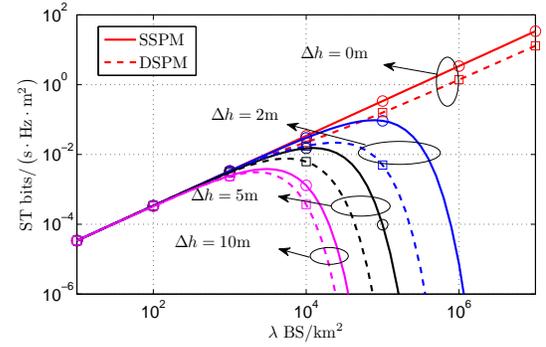}
\par\end{centering}
}
\par\end{centering}
\caption{\textcolor{blue}{\label{fig:ST SSPM and DSPM}}ST scaling laws. Set
$N_{\mathrm{U}}=1$ when $N_{\mathrm{a}}=1$, while set $N_{\mathrm{U}}=2$
when $N_{\mathrm{a}}>1$. For system settings, set $P=23$dBm and
$\tau=10$dB. Set $\alpha_{0}=4$ for SSPM, and $\alpha_{0}=2.5$,
$\alpha_{1}=4$ and $R_{1}=10$m for DSPM. In (a), set $\Delta h=2$m.
In (b), set $N_{\mathrm{a}}$=16. Lines and markers denote numerical
and simulation results, respectively.}
\end{figure}

\begin{corollary}[CP and ST Scaling Laws in SDMA System]

In SDMA system, CP and ST scale with BS density $\lambda$ as $\mathsf{CP}_{N}\left(\lambda\right)\sim e^{-\kappa\lambda}$
and $\mathsf{ST}_{N}\left(\lambda\right)\sim\lambda e^{-\kappa\lambda}$
($\kappa$ is a constant), respectively.

\label{corollary: CP and ST scaling law SDMA}

\end{corollary}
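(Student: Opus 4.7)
The plan is to take the ultra-dense limit $\lambda\to\infty$ in the expression for $\mathsf{CP}_{N}(\lambda)$ given by Corollary~\ref{corollary: CP and ST SDMA}. The key geometric fact is that under the density $f_{r}(x)=2\pi\lambda xe^{-\pi\lambda x^{2}}$ the 2D BS--user distance $r$ concentrates at scale $\lambda^{-1/2}$, so the 3D transmission distance $d_{0}=\sqrt{r^{2}+\Delta h^{2}}$ is driven to the positive floor $\Delta h$, and the link pathloss $l_{N}(d_{0})$ stabilizes at $l_{N}(\Delta h)$. Consequently the evaluation point $s=\tau/(Pl_{N}(d_{0}))$ settles at a finite constant $s_{\infty}=\tau/(Pl_{N}(\Delta h))$ in the limit.

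First I would analyse the exponent $g(s)=\int_{d_{0}}^{\infty}x(1-(1+sPl_{N}(x))^{-1})\,\mathrm{d}x$ in (\ref{eq:Laplace of Interference}). Since $\alpha_{N-1}>2$, the integrand is integrable on $[\Delta h,\infty)$ and $g(s_{\infty})$ tends to a finite positive constant $C_{0}$, so $\mathcal{L}_{I_{\mathrm{IC}}}(s)\to e^{-2\pi\lambda C_{0}}$ pointwise. Next, by repeated application of the chain rule, $\frac{\mathrm{d}^{k}}{\mathrm{d}s^{k}}e^{-2\pi\lambda g(s)}=P_{k}(\lambda)\,e^{-2\pi\lambda g(s)}$, where $P_{k}$ is a polynomial of degree $k$ in $\lambda$ built from successive derivatives of $g$; summing over $0\le k\le N_{\mathrm{a}}-N_{\mathrm{U}}$ inside (\ref{eq:CP SDMA}) yields a degree-$(N_{\mathrm{a}}-N_{\mathrm{U}})$ polynomial in $\lambda$ times $e^{-2\pi\lambda g(s)}$.

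Then I would take the expectation over $r$. The integrand carries the joint exponential factor $e^{-\pi\lambda r^{2}-2\pi\lambda g(s(r))}$, and a Laplace-type argument anchored at the mode $r=0$ shows the expectation inherits the rate $e^{-\kappa\lambda}$ with $\kappa=2\pi C_{0}>0$, any polynomial prefactor being absorbed in the scaling notation $\sim$ (read here as $\lambda^{-1}\log\mathsf{CP}_{N}(\lambda)\to-\kappa$). The ST statement then follows immediately from (\ref{eq: define ST}), since multiplication by $N_{\mathrm{U}}\lambda\log_{2}(1+\tau)$ is sub-exponential and so does not alter the exponential rate.

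The step I expect to be trickiest is justifying the interchange of limit and expectation and controlling the polynomial prefactor uniformly in $r$, particularly when $\Delta h$ lies near a breakpoint $R_{n}$ of $l_{N}$, where differentiating $g$ across the piecewise boundary of (\ref{eq:MUPM}) requires a careful dominated-convergence argument leveraging the monotonicity of $l_{N}$ and the Gaussian-like suppression from $e^{-\pi\lambda r^{2}}$.
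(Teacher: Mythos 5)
Your proposal is sound, but it follows a genuinely different route from the paper. The paper does not attack the general expression in Corollary~\ref{corollary: CP and ST SDMA} directly; instead it first observes that $\mathsf{CP}_{N}$ is monotone in $N_{\mathrm{U}}$, so the general case is sandwiched between full SDMA ($N_{\mathrm{U}}=N_{\mathrm{a}}$, where the direct-link gain is $\Gamma(1,1)$ and the derivative sum in (\ref{eq:CP SDMA}) collapses to the single $k=0$ term) and SU-BF ($N_{\mathrm{U}}=1$, handled by citing \cite[Theorem 2]{Ref_UDN_MISO}). The lower bound is then further relaxed by restricting the expectation to $r_{0}>R_{N-1}$, which places the whole interference integral in the outermost pathloss slope and yields a closed form via the hypergeometric identity, giving $\mathsf{CP}_{N}^{\mathrm{L}}(\lambda)\sim e^{-\kappa\lambda}$ explicitly. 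What the paper's sandwich buys is the complete avoidance of your ``trickiest step'': no Fa\`a di Bruno expansion of $\frac{\mathrm{d}^{k}}{\mathrm{d}s^{k}}\mathcal{L}_{I_{\mathrm{IC}}}$, no uniform control of polynomial prefactors, no interchange of limit and expectation. The price is precision: the two bounding exponentials have different rates, so the paper really only establishes exponential decay in the loose $\lambda^{-1}\log$ sense. Your direct Laplace-type analysis, anchored at $d_{0}\to\Delta h$, identifies the actual rate $\kappa=2\pi C_{0}$ and handles all $N_{\mathrm{U}}$ uniformly without external citation, which is strictly more informative. One point you should make explicit if you carry this out: the Laplace argument needs $\pi r^{2}+2\pi g(s(r))$ to be minimized at $r=0$, i.e.\ $g(s(r))$ nondecreasing in $r$; this holds (e.g.\ by the substitution $x=d_{0}t$ in the single-slope regions, which shows $g$ scales as $d_{0}^{2}$ times a constant), but it is not automatic from the monotonicity of $l_{N}$ alone and is the one place where your sketch is silent.
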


\begin{proof}Given the number of antennas, the experience of single
user would be degraded if more users within one cell share the antenna
resources. Therefore, we have $\mathsf{CP}_{N}^{\mathrm{F}}\left(\lambda\right)\leq\mathsf{CP}_{N}\left(\lambda\right)\leq\mathsf{CP}_{N}^{\mathrm{S}}\left(\lambda\right)$,
where $\mathsf{CP}_{N}^{\mathrm{F}}\left(\lambda\right)$ is obtained
by setting $N_{\mathrm{U}}=N_{\mathrm{a}}$ (full SDMA) and $\mathsf{CP}_{N}^{\mathrm{S}}\left(\lambda\right)$
is obtained by setting $N_{\mathrm{U}}=1$ (SU-BF). Hence, $\mathsf{CP}^{\mathrm{F}}\left(\lambda\right)$\begin{small}
\begin{align}
\overset{\left(\mathrm{a}\right)}{=} & \mathbb{E}_{r_{0}}\left[\exp\left(-2\pi\lambda\int_{d_{0}}^{\infty}x\left(1-\left(1+sPl_{N}\left(x\right)\right)^{-N_{\mathrm{a}}}\right)\mathrm{d}x\right)\right]\nonumber \\
> & \mathbb{E}_{r_{0}>R_{N-1}}\left[\exp\left(-2\pi\lambda\int_{d_{0}}^{\infty}x\left(1-\left(1+sPl_{N}\left(x\right)\right)^{-N_{\mathrm{a}}}\right)\mathrm{d}x\right)\right]\nonumber \\
\overset{\left(\mathrm{b}\right)}{=} & \mathbb{E}_{r_{0}>R_{N-1}}\left[\exp\left(-\pi\lambda d_{0}^{2}\left(\omega\left(N_{\mathrm{a}},\alpha_{N-1},\tau\right)-1\right)\right)\right]\nonumber \\
\overset{\left(\mathrm{c}\right)}{=} & \frac{\exp\left[-\pi\lambda\left(\omega\left(N_{\mathrm{a}},\alpha_{N-1},\tau\right)\left(R_{N-1}^{2}+\Delta h^{2}\right)-\Delta h^{2}\right)\right]}{\omega\left(N_{\mathrm{a}},\alpha_{N-1},\tau\right)}\nonumber \\
= & \mathsf{CP}_{N}^{\mathrm{L}}\left(\lambda\right),\label{eq:scaling law 1}
\end{align}
\end{small}where $s=\frac{\tau}{Pl_{N}\left(d_{0}\right)}$. In (\ref{eq:scaling law 1}),
(a) follows due to $H_{\mathrm{U}_{0},\mathrm{BS}_{0}}\sim\Gamma\left(1,1\right)$
since $N_{\mathrm{a}}=N_{\mathrm{U}}$, (b) follows due to $s=\frac{\tau}{PK_{N-1}d_{0}^{-\alpha_{N-1}}}$
and $l_{N}\left(x\right)=K_{N-1}x^{-\alpha_{N-1}}$ when $r_{0}>R_{N-1}$,
and (c) follows due to the PDF of $r_{0}$ given in (\ref{eq:distance PDF}).
In consequence, $\mathsf{CP}_{N}^{\mathrm{L}}\left(\lambda\right)\sim e^{-\kappa\lambda}$
holds. Besides, it follows from \cite[Theorem 2]{Ref_UDN_MISO} that
$\mathsf{CP}^{\mathrm{S}}\left(\lambda\right)\sim e^{-\kappa\lambda}$.
Therefore, $\mathsf{CP}_{N}\left(\lambda\right)\sim e^{-\kappa\lambda}$.

Following the definition of ST, the proof is complete.\end{proof}

Corollary \ref{corollary: CP and ST scaling law SDMA} reveals the
fundamental limitation of SDMA in UDN. Meanwhile, it is intuitive
to obtain that SU-BF outperforms SDMA and full SDMA in terms of CP.
Therefore, it is interesting to analytically compare the performance
of these schemes in terms of ST as well, the detail of which is presented
in the following part.

\section{SDMA, Full SDMA and SU-BF: A Special Case Study\label{sec:SDMA comparison}}

In this section, analytical comparison of SDMA, full SDMA and SU-BF
is made from the ST perspective. Especially, a special case study
is performed under SSPM given by
\begin{align}
l_{1}\left(d\right) & =d^{-\alpha_{0}}.\:d\geq0\label{eq:SSPM}
\end{align}
Nevertheless, the results on ST in Corollary \ref{corollary: CP and ST SDMA}
are much too complicated even when SSPM is applied. To facilitate
the comparison, we first present a simple but effective approximation
on ST in the Proposition \ref{proposition: CP and ST SDMA APP}.

\begin{proposition}

When SDMA is applied under SSPM, ST in ultra-dense downlink small
cell networks can be approximated as $\mathsf{ST}_{1}^{\dagger}\left(\lambda\right)=\lambda\mathsf{CP}_{1}^{\dagger}\left(\lambda\right)\log_{2}\left(1+\tau\right)$,
where
\begin{align}
\mathsf{CP}_{1}^{\dagger}\left(\lambda\right)= & \frac{\exp\left(-\pi\lambda\delta\left(\alpha_{0}\right)\right)}{\omega\left(N_{\mathrm{U}},\alpha_{0},\tau^{\mathrm{S}}\right)},\label{eq:CP SDMA APP}
\end{align}
where $\delta\left(x\right)=\Delta h^{2}\left(\omega\left(N_{\mathrm{U}},x,\tau^{\mathrm{S}}\right)-1\right)$
and $\tau^{\mathrm{S}}=\frac{\tau}{N_{\mathrm{a}}-N_{\mathrm{U}}+1}$.

\label{proposition: CP and ST SDMA APP}

\end{proposition}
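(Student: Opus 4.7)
The plan is to reduce the expectation in Corollary \ref{corollary: CP and ST SDMA} to a closed form by approximating the $\Gamma(N_{\mathrm{a}}-N_{\mathrm{U}}+1,1)$ direct-link gain with an exponential random variable whose mean matches the gamma. Since $\mathbb{E}[\Gamma(m,1)]=m$, this corresponds to the threshold rescaling $\tau^{\mathrm{S}}=\tau/(N_{\mathrm{a}}-N_{\mathrm{U}}+1)$, which collapses the derivative sum in (\ref{eq:CP SDMA}) to just its $k=0$ term. After the substitution, the approximated coverage probability reads $\mathsf{CP}_{1}^{\dagger}(\lambda)=\mathbb{E}_{r_{0}}[\mathcal{L}_{I_{\mathrm{IC}}}(s)]$ evaluated at $s=\tau^{\mathrm{S}}/(Pl_{1}(d_{0}))$.

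Next I would specialise (\ref{eq:Laplace of Interference}) to SSPM, $l_{1}(x)=x^{-\alpha_{0}}$, and retain the interference gamma order $N_{\mathrm{U}}$ by replacing $(1+sPl_{1}(x))^{-1}$ with $(1+sPl_{1}(x))^{-N_{\mathrm{U}}}$. The change of variables $u=x/d_{0}$ pulls out a factor $d_{0}^{2}$ and yields an exponent of $-2\pi\lambda d_{0}^{2}\int_{1}^{\infty}u[1-(1+\tau^{\mathrm{S}}u^{-\alpha_{0}})^{-N_{\mathrm{U}}}]\,du$. This inner integral is the standard PPP-interference integral, and it is exactly the object the notation $\omega$ was introduced to package: it equals $\tfrac{1}{2}[\omega(N_{\mathrm{U}},\alpha_{0},\tau^{\mathrm{S}})-1]$ via the Gauss hypergeometric identity underlying the definition $\omega(x,y,z)={}_{2}F_{1}(x,-2/y,1-2/y,-z)$. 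Hence $\mathcal{L}_{I_{\mathrm{IC}}}(s)=\exp(-\pi\lambda d_{0}^{2}[\omega(N_{\mathrm{U}},\alpha_{0},\tau^{\mathrm{S}})-1])$.

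Finally, I would use $d_{0}^{2}=r_{0}^{2}+\Delta h^{2}$ to split the exponent into a $\lambda$-dependent prefactor $e^{-\pi\lambda\Delta h^{2}[\omega(N_{\mathrm{U}},\alpha_{0},\tau^{\mathrm{S}})-1]}$ and an $r_{0}$-dependent factor, and average over $r_{0}$ with the distance PDF (\ref{eq:distance PDF}). The remaining integral is Gaussian in $r_{0}$: $\int_{0}^{\infty}2\pi\lambda x\,e^{-\pi\lambda\omega(N_{\mathrm{U}},\alpha_{0},\tau^{\mathrm{S}})x^{2}}dx=1/\omega(N_{\mathrm{U}},\alpha_{0},\tau^{\mathrm{S}})$. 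Assembling the pieces and identifying $\delta(\alpha_{0})=\Delta h^{2}[\omega(N_{\mathrm{U}},\alpha_{0},\tau^{\mathrm{S}})-1]$ yields (\ref{eq:CP SDMA APP}); the ST expression then follows immediately from the definition in (\ref{eq: define ST}).

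The only genuinely non-routine step is the first one. The gamma-to-exponential reduction is not exact; it is the customary mean-matching heuristic used throughout the Rayleigh/gamma stochastic-geometry literature, and can be formally supported by two-sided bounds on the gamma CDF or validated against the exact Corollary \ref{corollary: CP and ST SDMA} numerically. What matters for the later development is that the resulting closed form still exhibits the $e^{-\kappa\lambda}$ decay established in Corollary \ref{corollary: CP and ST scaling law SDMA}, so $\mathsf{CP}_{1}^{\dagger}$ is analytically tractable enough to enable the subsequent comparison of SDMA, full SDMA and SU-BF in Section \ref{sec:SDMA comparison}.
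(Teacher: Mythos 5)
Your proposal is correct and follows essentially the same route as the paper: the mean-matching replacement of $H_{\mathrm{U}_{0},\mathrm{BS}_{0}}\sim\Gamma\left(N_{\mathrm{a}}-N_{\mathrm{U}}+1,1\right)$ by an exponential variable (equivalently, rescaling the threshold to $\tau^{\mathrm{S}}$ and keeping only the $k=0$ term), retaining the order-$N_{\mathrm{U}}$ gamma interference in the Laplace transform, reducing the PPP integral to $\omega\left(N_{\mathrm{U}},\alpha_{0},\tau^{\mathrm{S}}\right)$, and averaging over $r_{0}$ with (\ref{eq:distance PDF}). You simply make explicit the change of variables and the final Gaussian integral that the paper's proof leaves implicit.
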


\begin{proof}The key to the approximation is to use $G_{\mathrm{U}_{0},\mathrm{BS}_{0}}\sim\mathrm{Exp}\left(\frac{1}{N_{\mathrm{a}}-N_{\mathrm{U}}+1}\right)$
to replace $H_{\mathrm{U}_{0},\mathrm{BS}_{0}}\sim\varGamma\left(N_{\mathrm{a}}-N_{\mathrm{U}}+1,1\right)$.
Accordingly, we have
\begin{align}
\mathsf{CP}_{1}^{\dagger}\left(\lambda\right)= & \mathbb{E}_{r_{0}}\left[e^{-2\pi\lambda\int_{d_{0}}^{\infty}x\left(1-\left(1+s^{\dagger}Pl_{1}\left(x\right)\right)^{-N_{\mathrm{U}}}\right)\mathrm{d}x}\right]\nonumber \\
= & \mathbb{E}_{r_{0}}\left[e^{-\pi\lambda d_{0}^{2}\left[\omega\left(N_{\mathrm{U}},\alpha_{0},\tau^{\mathrm{S}}\right)-1\right]}\right],\label{eq:APP proof 1}
\end{align}
where $s^{\dagger}=\frac{\tau^{\mathrm{S}}}{Pl_{1}\left(d_{0}\right)}$
and $d_{0}=\sqrt{r_{0}^{2}+\Delta h^{2}}$. Aided by (\ref{eq:APP proof 1})
and the PDF of $r_{0}$ given in (\ref{eq:distance PDF}), $\mathsf{CP}_{1}\left(\lambda\right)$
can be obtained.\end{proof}

\begin{figure}[t]
\begin{centering}
\subfloat[\label{fig:test accuracy}Examine the approximation accuracy.]{\begin{centering}
\includegraphics[width=3in]{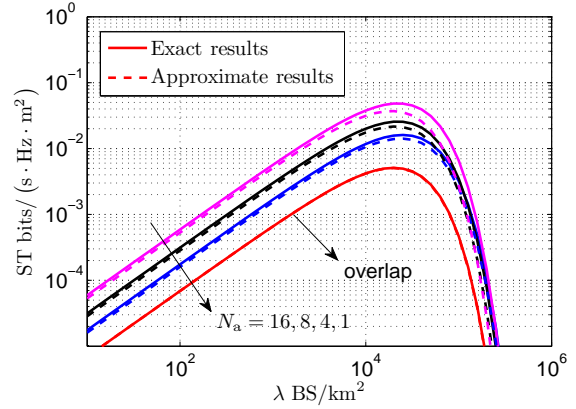}
\par\end{centering}
}
\par\end{centering}
\begin{centering}
\subfloat[\label{fig:comparison}Comparison of SDMA, full SDMA and SU-BF.]{\begin{centering}
\includegraphics[width=3in]{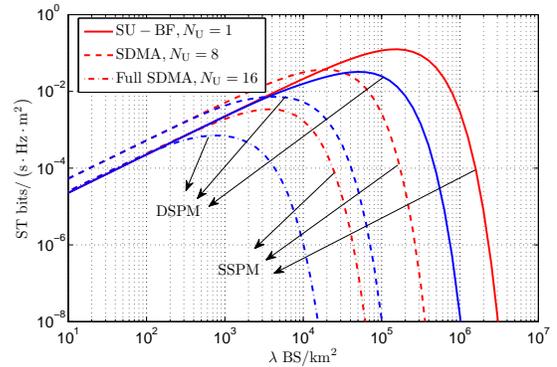}
\par\end{centering}
}
\par\end{centering}
\caption{\label{fig:SDMA VS MISO}ST scaling laws. In (a), set $N_{\mathrm{U}}=1$
when $N_{\mathrm{a}}=1$ and set $N_{\mathrm{U}}=N_{\mathrm{a}}/2$
when $N_{\mathrm{a}}>1$. In (b), set $N_{\mathrm{a}}$=16 and other
system parameters are identical to those in Fig. \ref{fig:ST SSPM and DSPM}.}
\end{figure}

We next verify the accuracy of the approximation in Proposition \ref{proposition: CP and ST SDMA APP}
using Fig. \ref{fig:test accuracy}. It can be seen that the gaps
between the exact and approximate results are small. They even overlap
when $N_{\mathrm{a}}=1$. Moreover, it is observed that the critical
densities obtained via the approximate and exact results are almost
identical. Therefore, it is valid to use the approximate results to
derive critical density as follows.

\begin{corollary}

The critical density in SDMA downlink small cell networks is given
by 
\begin{align}
\lambda^{*}= & \frac{1}{\pi\Delta h^{2}\left(\omega\left(N_{\mathrm{U}},\alpha_{0},\tau^{\mathrm{S}}\right)-1\right)},\label{eq:CP SDMA APP-1}
\end{align}
where $\tau^{\mathrm{S}}=\frac{\tau}{N_{\mathrm{a}}-N_{\mathrm{U}}+1}$.

\label{corollary: critical density}

\end{corollary}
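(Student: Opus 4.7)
The plan is to locate the critical density by maximizing the approximate ST expression $\mathsf{ST}_1^{\dagger}(\lambda)$ established in Proposition \ref{proposition: CP and ST SDMA APP}. Since Fig.~\ref{fig:test accuracy} shows that the critical density extracted from the approximation essentially coincides with that obtained from the exact expression, working with $\mathsf{ST}_1^{\dagger}(\lambda)$ incurs negligible loss of accuracy and makes the optimization tractable.

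First I would substitute $\mathsf{CP}_1^{\dagger}(\lambda)$ from (\ref{eq:CP SDMA APP}) into the ST definition to obtain
\begin{equation*}
\mathsf{ST}_1^{\dagger}(\lambda)=\frac{\log_2(1+\tau)}{\omega(N_{\mathrm{U}},\alpha_0,\tau^{\mathrm{S}})}\,\lambda\,e^{-\pi\lambda\delta(\alpha_0)}.
\end{equation*}
All multiplicative factors in front of $\lambda e^{-\pi\lambda\delta(\alpha_0)}$ are independent of $\lambda$, so the maximizer is determined purely by the scalar function $g(\lambda)=\lambda e^{-c\lambda}$ with $c=\pi\delta(\alpha_0)=\pi\Delta h^{2}(\omega(N_{\mathrm{U}},\alpha_0,\tau^{\mathrm{S}})-1)$.

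Next I would differentiate: $g'(\lambda)=(1-c\lambda)e^{-c\lambda}$. The unique stationary point is $\lambda^{*}=1/c$, and since $g'(\lambda)>0$ for $\lambda<1/c$ and $g'(\lambda)<0$ for $\lambda>1/c$, this stationary point is the global maximizer on $(0,\infty)$. Substituting $c$ back yields exactly the expression (\ref{eq:CP SDMA APP-1}). One should check that $c>0$, which holds because $\omega(N_{\mathrm{U}},\alpha_0,\tau^{\mathrm{S}})>1$ for any $\tau^{\mathrm{S}}>0$ (the hypergeometric factor evaluates to $1$ only at $\tau^{\mathrm{S}}=0$), ensuring $\lambda^{*}$ is finite and positive.

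There is essentially no hard step: the argument is a one-variable first-order optimization of a function of the form $\lambda e^{-c\lambda}$. The only subtlety worth mentioning explicitly is the justification that the critical density of $\mathsf{ST}_1$ can be read off from $\mathsf{ST}_1^{\dagger}$, which is supported empirically by Fig.~\ref{fig:test accuracy}; a more rigorous route would bound $|\mathsf{ST}_1(\lambda)-\mathsf{ST}_1^{\dagger}(\lambda)|$ uniformly in $\lambda$ using the standard Gamma-to-exponential moment-matching approximation, but given the tightness already illustrated, this level of detail appears unnecessary for the stated corollary.
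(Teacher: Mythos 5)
Your proposal is correct and follows essentially the same route as the paper, which simply states that the result is obtained by solving $\frac{\partial\mathsf{ST}_{1}^{\dagger}(\lambda)}{\partial\lambda}=0$; you merely spell out the elementary maximization of $\lambda e^{-c\lambda}$ and add the (valid) sanity checks that $c>0$ and that the stationary point is a global maximum.
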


\begin{proof}The proof can be feasibly completed by solving $\frac{\partial\mathsf{ST}_{1}^{\dagger}\left(\lambda\right)}{\partial\lambda}=0$.\end{proof}

Corollary \ref{corollary: critical density} identifies the relationship
of key system parameters and critical density. For instance, it is
apparent that $\lambda^{*}$ inversely increases with $\Delta h^{2}$.
For this reason, it is suggested to lower the transmitter and receiver
antenna heights and reduce the AHD such that network densification
could be still effective in enhancing spectrum efficiency even when
BS density is large.

Next, we use the results in Corollary \ref{corollary: critical density}
to further evaluate the performance of SDMA, full SDMA and SU-BF in
UDN. Before that, the results on $\omega\left(x,y,z\right)$ are given
in Lemma \ref{lemma: hypergeometric function}.

\begin{lemma}

Given $N>1$ (an integer), $-1<b<0$, $c>0$ and $z<0$, $_{2}F_{1}\left(N,b,c,z\right)$
is an increasing function of $N$.

\label{lemma: hypergeometric function}

\end{lemma}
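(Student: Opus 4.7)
The plan is to establish monotonicity by showing the forward difference $\Delta(N) := {}_{2}F_{1}(N+1,b,c,z) - {}_{2}F_{1}(N,b,c,z)$ is strictly positive, after reducing it to a hypergeometric function whose positivity is manifest from Euler's integral representation.

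First I would write $\Delta(N)$ as a single Pochhammer series and apply the elementary identity $(N+1)_{k} - (N)_{k} = k\,(N+1)_{k-1}$ for $k \geq 1$ (the $k=0$ contribution vanishes). Shifting the summation index to $j = k-1$ and invoking $(b)_{j+1} = b\,(b+1)_{j}$ and $(c)_{j+1} = c\,(c+1)_{j}$ collapses the difference to the compact form
\[
\Delta(N) \;=\; \frac{b\,z}{c}\,{}_{2}F_{1}\!\left(N+1,\,b+1,\,c+1,\,z\right),
\]
which is essentially a classical contiguous relation. This step is routine bookkeeping with Pochhammer symbols.

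Next I would observe that $bz/c > 0$ because $b<0$, $z<0$, and $c>0$, so monotonicity reduces to positivity of the reduced hypergeometric function. The hypotheses force $c+1 > b+1 > 0$ (from $b>-1$ and $c>0>b$), so Euler's integral representation is valid and expresses the reduced hypergeometric as a strictly positive gamma prefactor times
\[
\int_{0}^{1} t^{b}\,(1-t)^{c-b-1}\,(1-z t)^{-(N+1)}\,dt.
\]
Each factor of the integrand is strictly positive on $(0,1)$: the factor $(1-zt)^{-(N+1)}$ because $z<0$ implies $1-zt > 1$, and the singularity of $t^{b}$ at the origin is integrable since $b > -1$.

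I do not anticipate a genuine obstacle. The one point that deserves some care is verifying the two convergence conditions $b+1 > 0$ and $c+1 > b+1$ that license Euler's representation; both follow immediately from $-1 < b < 0 < c$. After that, the sign of $bz/c$ and the positive integrand together force $\Delta(N) > 0$, which is the desired monotonicity.
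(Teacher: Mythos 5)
Your proposal is correct, and it diverges from the paper's argument after the first step. Both proofs begin with the same contiguous relation, ${}_{2}F_{1}(N+1,b,c,z)-{}_{2}F_{1}(N,b,c,z)=\frac{bz}{c}\,{}_{2}F_{1}(N+1,b+1,c+1,z)$, and both reduce the problem to the positivity of the right-hand side (where $bz/c>0$ is immediate from the sign hypotheses). The paper then proceeds by induction on $N$: it takes $\varDelta_{1}>0$ as the induction hypothesis, rewrites $\varDelta_{2}$ through a second contiguous relation and the symmetry of ${}_{2}F_{1}$ in its first two parameters, and closes the loop by feeding the induction hypothesis back in to get the sign of $\varDelta_{3}$; notably, it never verifies a base case. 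You instead prove positivity of ${}_{2}F_{1}(N+1,b+1,c+1,z)$ outright from Euler's integral representation, checking the admissibility conditions $c+1>b+1>0$ and observing that every factor of the integrand is positive for $z<0$. This buys you a non-inductive, self-contained argument that sidesteps the base-case issue entirely, and it has the added virtue of being valid for all $z<0$ by construction, since the Euler integral is the analytic continuation of the series beyond $|z|<1$ (relevant here, as the application takes $z=-\tau^{\mathrm{S}}$ with $|z|$ possibly exceeding $1$); the series-rearrangement derivation of the contiguous relation should strictly be read as an identity of analytic functions extended by continuation, but that is standard. The only price is the appeal to the integral representation, which the paper's purely algebraic manipulation of contiguous relations avoids.
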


\begin{proof}The proof can be completed using mathematical induction.
Assuming $\varDelta_{1}={}_{2}F_{1}\left(N+1,b,c,z\right)-{}_{2}F_{1}\left(N,b,c,z\right)=\frac{bz}{c}{}_{2}F_{1}\left(N+1,b+1,c+1,z\right)>0$,
if the inequality $\varDelta_{2}={}_{2}F_{1}\left(N+2,b,c,z\right)-{}_{2}F_{1}\left(N+1,b,c,z\right)>0$
holds, the proof is complete. By definition of hypergemetric function,
we have\begin{small}
\begin{align}
\varDelta_{2} & =\frac{bz}{c}{}_{2}F_{1}\left(N+2,b+1,c+1,z\right)\nonumber \\
 & =\frac{b}{N+1}\left[_{2}F_{1}\left(N+1,b+1,c,z\right)-{}_{2}F_{1}\left(N+1,b,c,z\right)\right].\label{eq:hyper proof 1}
\end{align}
\end{small}As $N>0$ and $b<0$, to prove $\varDelta_{2}>0$, it
is sufficient to show $_{2}F_{1}\left(N,b+1,c,z\right)-{}_{2}F_{1}\left(N,b,c,z\right)<0$.
Hence, we have
\begin{align}
\varDelta_{3}= & _{2}F_{1}\left(b+1,N,c,z\right)-{}_{2}F_{1}\left(b,N,c,z\right)\nonumber \\
= & \frac{Nz}{c}{}_{2}F_{1}\left(b+1,N+1,c+1,z\right)\nonumber \\
= & \frac{Nz}{c}{}_{2}F_{1}\left(N+1,b+1,c+1,z\right).\label{eq:hyper proof 2}
\end{align}
According to the assumption that $\varDelta_{1}>0$, we have $_{2}F_{1}\left(N+1,b+1,c+1,z\right)>0$.
Therefore, $\varDelta_{3}<0$ holds.\end{proof}

Aided by Lemma \ref{lemma: hypergeometric function}, we provide the
results on full SDMA in the following theorem.

\begin{theorem}

When full SDMA is applied, the critical density of downlink small
cell networks is a decreasing function of the number of antennas.

\label{theorem: critical density full SDMA}

\end{theorem}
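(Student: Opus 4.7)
The plan is to reduce the theorem directly to Lemma~\ref{lemma: hypergeometric function} via the closed form for $\lambda^{*}$ given in Corollary~\ref{corollary: critical density}. First I would specialize that corollary to full SDMA by setting $N_{\mathrm{U}}=N_{\mathrm{a}}$, which collapses the effective threshold to $\tau^{\mathrm{S}}=\tau/(N_{\mathrm{a}}-N_{\mathrm{U}}+1)=\tau$, a constant independent of $N_{\mathrm{a}}$. Hence
\[
\lambda^{*}=\frac{1}{\pi\Delta h^{2}\bigl(\omega(N_{\mathrm{a}},\alpha_{0},\tau)-1\bigr)},
\]
and the question reduces to showing that $\omega(N_{\mathrm{a}},\alpha_{0},\tau)$ is strictly increasing in $N_{\mathrm{a}}$.

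Next I would unfold the notation $\omega(N_{\mathrm{a}},\alpha_{0},\tau)={}_{2}F_{1}\bigl(N_{\mathrm{a}},-2/\alpha_{0},1-2/\alpha_{0},-\tau\bigr)$ and check that the parameters satisfy the hypotheses of Lemma~\ref{lemma: hypergeometric function}. Since the practical assumption $\alpha_{0}>2$ is in force for SSPM, we have $b=-2/\alpha_{0}\in(-1,0)$ and $c=1-2/\alpha_{0}>0$; moreover $z=-\tau<0$ because $\tau>0$. Lemma~\ref{lemma: hypergeometric function} then gives that ${}_{2}F_{1}(N,b,c,z)$ is strictly increasing in the integer argument $N$, so $\omega(N_{\mathrm{a}},\alpha_{0},\tau)-1$ is strictly increasing in $N_{\mathrm{a}}$.

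Finally, since the denominator $\pi\Delta h^{2}\bigl(\omega(N_{\mathrm{a}},\alpha_{0},\tau)-1\bigr)$ is positive (the pre-factor $\omega-1>0$ under the same conditions, as $_{2}F_{1}(1,b,c,z)=1+\tfrac{bz}{c}>1$ already for $N_{\mathrm{a}}=1$ and the sequence is increasing) and strictly grows with $N_{\mathrm{a}}$, the critical density $\lambda^{*}$ is a strictly decreasing function of $N_{\mathrm{a}}$, which is the claimed statement.

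There is essentially no hidden obstacle beyond verifying that the hypothesis of Lemma~\ref{lemma: hypergeometric function} applies; the only subtlety I would double-check is the sign of $\omega(N_{\mathrm{a}},\alpha_{0},\tau)-1$ to ensure the critical density is well-defined and that ``increasing $\omega$'' really translates into ``decreasing $\lambda^{*}$'' (rather than flipping sign). This is immediate from the series representation of $_{2}F_{1}$ together with $b<0$, $c>0$, $z<0$, which forces $bz/c>0$ and hence $\omega>1$ for every $N_{\mathrm{a}}\ge 1$.
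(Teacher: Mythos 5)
Your proposal is correct and follows essentially the same route as the paper: specialize Corollary~\ref{corollary: critical density} to $N_{\mathrm{U}}=N_{\mathrm{a}}$ so that $\tau^{\mathrm{S}}=\tau$, then invoke Lemma~\ref{lemma: hypergeometric function} to conclude that $\omega(N_{\mathrm{a}},\alpha_{0},\tau)$ increases with $N_{\mathrm{a}}$ and hence $\lambda^{*}$ decreases. Your additional checks (that $\alpha_{0}>2$ puts the parameters in the lemma's hypothesis range and that $\omega-1>0$ so the monotonicity does not flip sign) are worthwhile details the paper leaves implicit, though note that $_{2}F_{1}(1,b,c,z)$ is an infinite series rather than $1+bz/c$ exactly; the positivity of $\omega-1$ is more cleanly read off from its origin as the positive interference integral in (\ref{eq:scaling law 1}).
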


\begin{proof}Applying full SDMA, $N_{\mathrm{U}}=N_{\mathrm{a}}$
and the critical density in Corollary \ref{corollary: critical density}
degenerates into
\begin{align}
\lambda^{*}= & \frac{1}{\pi\Delta h^{2}\left(\omega\left(N_{\mathrm{a}},\alpha_{0},\tau\right)-1\right)}.\label{eq:critical density proof 1}
\end{align}
Since $\omega\left(N_{\mathrm{a}},\alpha_{0},\tau\right)$ is an increasing
function of $N_{\mathrm{a}}$ (Lemma \ref{lemma: hypergeometric function}),
the proof is complete.\end{proof}

Theorem \ref{theorem: critical density full SDMA} evaluates the performance
of full SDMA in terms of critical density. More specifically, it is
observed from Fig. \ref{fig:comparison} that ST would begin to decrease
at a smaller critical density when more antennas are equipped. Following
Theorem \ref{theorem: critical density full SDMA}, we further compare
SDMA, full SDMA and SU-BF in Theorem \ref{theorem: critical density SDMA}.

\begin{theorem}

Given the number of antennas equipped on each BS, the critical density
achieved by SU-BF is greater than those achieved by SDMA and full
SDMA.

\label{theorem: critical density SDMA}

\end{theorem}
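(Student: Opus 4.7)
The plan is to reduce the theorem to a comparison of the scaling-law exponents $\omega(N_U, \alpha_0, \tau^S) - 1$ that appear in Corollary \ref{corollary: critical density}. Since $\lambda^{*} = [\pi \Delta h^{2}(\omega(N_U, \alpha_0, \tau^S) - 1)]^{-1}$ with $\tau^S = \tau/(N_a - N_U + 1)$, and the prefactor $\pi \Delta h^{2}$ is independent of $N_U$, maximizing $\lambda^{*}$ is the same as minimizing $\omega - 1$ as $N_U$ ranges over $\{1, 2, \ldots, N_a\}$. SU-BF corresponds to $N_U = 1$, $\tau^S = \tau/N_a$; full SDMA to $N_U = N_a$, $\tau^S = \tau$; all remaining SDMA schemes lie in between. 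Proving the claim therefore reduces to showing
\begin{equation*}
\omega\left(N_U, \alpha_0, \frac{\tau}{N_a - N_U + 1}\right) > \omega\left(1, \alpha_0, \frac{\tau}{N_a}\right)
\end{equation*}
for every integer $N_U \geq 2$, which automatically covers full SDMA by specializing to $N_U = N_a$.

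My strategy is to chain two monotonicities by inserting the intermediate value $\omega(N_U, \alpha_0, \tau/N_a)$:
\begin{equation*}
\omega\!\left(N_U, \alpha_0, \frac{\tau}{N_a - N_U + 1}\right) \geq \omega\!\left(N_U, \alpha_0, \frac{\tau}{N_a}\right) > \omega\!\left(1, \alpha_0, \frac{\tau}{N_a}\right).
\end{equation*}
The rightmost inequality is a direct application of Lemma \ref{lemma: hypergeometric function}: with $z = -\tau/N_a < 0$, $b = -2/\alpha_0 \in (-1, 0)$ (valid because $\alpha_0 > 2$) and $c = 1 - 2/\alpha_0 > 0$, the lemma yields strict monotonicity of ${}_{2}F_{1}(\cdot, b, c, z) = \omega(\cdot, \alpha_0, \tau/N_a)$ in its first argument, so moving from $1$ up to any $N_U \geq 2$ strictly increases the value.

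The main obstacle is the first inequality, which requires monotonicity of $\omega(N_U, \alpha_0, \cdot)$ in its third argument---something Lemma \ref{lemma: hypergeometric function} does not address. I would settle it by returning to the integral form already implicit in the proof of Proposition \ref{proposition: CP and ST SDMA APP}. Substituting $s^{\dagger} = \eta d_{0}^{\alpha_{0}}/P$ into the step that collapsed the exponent to $\omega - 1$ gives
\begin{equation*}
\omega(N_U, \alpha_0, \eta) - 1 = \frac{2}{d_{0}^{2}} \int_{d_{0}}^{\infty} x \left( 1 - \left(1 + \eta (d_{0}/x)^{\alpha_{0}}\right)^{-N_U} \right) dx,
\end{equation*}
whose integrand is strictly increasing in $\eta > 0$. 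Since $\tau/(N_a - N_U + 1) > \tau/N_a$ whenever $N_U \geq 2$, the first inequality is strict. Chaining the two strict inequalities gives the required strict comparison between SU-BF and every other SDMA scheme, and the theorem follows.
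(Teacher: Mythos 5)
Your proof is correct and follows essentially the same route as the paper: both reduce the claim to showing $\omega\left(N_{\mathrm{U}},\alpha_{0},\tau^{\mathrm{S}}\right)$ increases with $N_{\mathrm{U}}$, and both split this into monotonicity of $\omega$ in its first argument (Lemma \ref{lemma: hypergeometric function}) plus monotonicity in its third argument combined with the fact that $\tau^{\mathrm{S}}=\tau/\left(N_{\mathrm{a}}-N_{\mathrm{U}}+1\right)$ grows with $N_{\mathrm{U}}$. The only difference is that where the paper invokes an extension of an external lemma from \cite{Ref_SBPM} for the monotonicity in $\tau^{\mathrm{S}}$, you establish it self-containedly from the integral representation of $\omega-1$ already implicit in the proof of Proposition \ref{proposition: CP and ST SDMA APP}, which is a minor but genuine gain in rigor.
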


\begin{proof}The results can be proved by showing $\lambda^{*}$
in Corollary \ref{corollary: critical density} is a decreasing function
of $N_{\mathrm{U}}$ or equivalently $\omega\left(N_{\mathrm{U}},\alpha_{0},\tau^{\mathrm{S}}\right)$
is an increasing function of $N_{\mathrm{U}}$. By making an extension
of \cite[Lemma 1]{Ref_SBPM}, it can be shown that $\omega\left(N_{\mathrm{U}},\alpha_{0},\tau^{\mathrm{S}}\right)$
increases with $\tau^{\mathrm{S}}=\frac{1}{N_{\mathrm{a}}-N_{\mathrm{U}}+1}$,
which monotonously increases with $N_{\mathrm{U}}$. Therefore, aided
by Lemma \ref{lemma: hypergeometric function}, it can be proved that
$\omega\left(N_{\mathrm{U}},\alpha_{0},\tau^{\mathrm{S}}\right)$
increases with $N_{\mathrm{U}}$.\end{proof}

Theorem \ref{theorem: critical density SDMA} indicates that simultaneously
serving multiple users would enable ST to decrease at a smaller BS
density. In other words, compared to SU-BF, SDMA potentially degrades
spectrum efficiency in UDN. This is inconsistent with traditional
understanding of SDMA in sparse deployment \cite{Ref_MIMO_2}, which
shows higher spectrum efficiency could be achieved by SDMA via serving
more users. This can be verified via the results in Fig. \ref{fig:comparison}.
Specifically, although a higher ST could be obtained by SDMA when
$N_{\mathrm{U}}$ is large in sparse scenario, it would experience
a notable decrease at a greater BS density, compared to SU-BF. The
intuition behind this is that the multi-user gain harvested by SDMA
is ruined by the overwhelming interference in UDN. 

Besides, it is numerically shown in Fig. \ref{fig:comparison} that
SU-BF outperforms SDMA and full SDMA in terms of ST in UDN as well.
From this perspective, SU-BF is more favorable to enhance network
capacity in UDN. While the analytical comparison is made based on
SSPM, it could be easily tested that the comparison results are still
valid under MSPM. For instance, the results under DSPM, i.e., $N=2$
in (\ref{eq:MUPM}), are verified in Fig. \ref{fig:comparison}.

\section{Conclusion\label{sec:Conclusion}}

In this letter, the performance of SDMA has been explored in ultra-dense
downlink small cell networks. Specifically, SDMA, albeit incapable
of improving network capacity scaling law, is proved to significantly
enhancing spatial throughput, compared to single-antenna regime. Moreover,
it is interestingly shown that network capacity and critical density
could be maximized when single user is served in each small cell,
which contradicts with the intuition. The primary reason is that demerits
of overwhelming inter-cell interference dominate the benefits of multi-user
gain of SDMA in UDN. Therefore, the outcomes of this work could provide
guideline towards the application and optimization of SDMA in UDN
via reasonably selecting the number of users to serve. 

\bibliographystyle{IEEEtran}
\bibliography{ref_BPM}

\end{document}